\newtheorem*{theorem}{Theorem}
\begin{document}

\title{Flexible Allocation of Heterogeneous Resources to Services on an IoT Device}

\author{

\IEEEauthorblockN{Vangelis Angelakis, Ioannis Avgouleas, Nikolaos Pappas and Di Yuan}

\IEEEauthorblockA{Department of Science and Technology,
 Link{\"o}ping University, Campus Norrk{\"o}ping, 60 174, Sweden\\
Emails: vangelis.angelakis@liu.se, ioannis.avgouleas@liu.se, nikolaos.pappas@liu.se, diyua@itn.liu.se}

}

% make the title area
\maketitle
\IEEEpeerreviewmaketitle
\begin{abstract}
In the Internet of Things (IoT), devices and gateways
may be equipped with multiple, heterogeneous network interfaces which should be utilized by a large number of services. In this work, we model the problem of assigning services' resource demands to a device's heterogeneous interfaces and give a Mixed Integer Linear Program (MILP) formulation for it. For meaningful instance sizes the MILP model gives optimal solutions to the presented computationally-hard problem. We provide insightful results discussing the properties of the derived solutions with respect to the splitting of services to different interfaces.
\footnote{The research leading to these results has received funding from the European Union's Seventh Framework Programme (FP7/2007-2013) under grant agreement no [609094] (RERUM) and also under the REA grant agreement no [612361] (SOrBet). Preliminary results of this work have been presented at \cite{INFOCOM_POSTER_2015}.}

\end{abstract}

\section{Introduction}
Within the Internet of Things, resource-constrained devices may be called to provide an unpredictable set of services. Recent architectural frameworks (see e.g. \cite{RERUM01,RERUM02} and the references therein) call for de-verticalization of solutions, with applications being developed, independently of the end devices which may be anything from a sensor to the latest smartphone. This heterogeneity of devices and resources they provide to developing IoT applications is at the core of our work.

We focus on IoT networking devices having multiple, different interfaces, each of which has access to a collection of finite heterogeneous resources such as downlink data rate, buffer space, CPU interrupts, and so forth. We also consider that each service is characterized by a set of demands that can be served by the resources available on one device's interfaces. Assuming a middleware has already assigned a service onto a given device, in this work we address the problem of flexibly mapping the service resource demands onto the interfaces of that device. The flexibility of the services lies on the assumption that a demand may be served by more than one of the available interfaces, in case the available resource does not suffice, or the cost of utilizing resources over different physical interfaces proves beneficial. At the end of the day, the derived mapping can be viewed as a new virtual interface, with a one-to-one mapping of services to such dedicated virtual interfaces.
 
In general, such multi-resource allocation problems cannot be turned into single-resource ones by interchanging different resources: clearly a demand for downlink data rate cannot be exchanged with transmit buffer space. However, current literature mostly addresses such multi-resource problems, as such as scheduling jobs, often as a single-resource problem (e.g., the Hadoop and Dryad schedulers). Interface virtualization standards on the other hand deal only with same type of interfaces. Our work here presents a mixed-integer linear programming formulation of the problem of assigning services to heterogeneous interfaces with different resources. Although, the problem is computationally hard, for reasonable instance sizes, with respect to number of interfaces, types of resources, and services, optimal solutions can be derived. Initial results outline the role of different costs in the resulting flexibility of the service splitting over different interfaces.

\section{System Model}

We consider that we have a set $\mathcal{I}$ of $i$ interfaces. The interfaces are characterized by a set $\mathcal{K}$ of $k$ resources associated with them (for example {CPU cycles, Downlink capacity, Buffer size}). We assume that each service $j \in \mathcal{J}$ is associated a $K$-dimensioned demand integer vector $\mathbf{d_j}$. Likewise each interface has a $K$-dimensioned resources availability integer vector $\mathbf{b_i}$. We consider the case in which services are flexible and can utilize resources of different interfaces, with appropriate costs to model the job management overheads that will be imposed upon the operating system of the device carrying the interfaces.  We finally make the assumption that the given assignment is a feasible one, i.e. $ \sum_{j \in \mathcal{J}}d_{jk} \leq \sum_{i \in \mathcal{I}} b_{ik} $. Our goal is to assign all services to the physical interfaces, minimizing the cost of using the interfaces. We call this the \textit{Service-to-Interface Assignment (SIA) }problem. 

In the model that follows we use variable $x_{ijk}$  for the amount of the $k$-th resource of the $i$-th interface utilized by job $j$. We consider these values to be integer like the ones in the demands vectors. We denote $c_{ik}$ the per-unit cost to utilize resource $k$ on interface $i$, while with $F_{i}$ the activation cost of interface $i$. 
We also assume that each job $j$ incurs an overhead on the resource it utilizes, which may vary by interfaces in order to capture MAC and PHY layer realities, this is denoted $a_{ijk}$. 
Thus, our model amounts to:

\begin{equation}
	\begin{aligned}
 & \underset{}{\text{min.}}
 & & \sum\limits_{\mathnormal{k} \in \mathcal{K}} \sum\limits_{\mathnormal{i} \in \mathcal{I}} \mathnormal{c}_{ik} \sum\limits_{\mathnormal{j} \in \mathcal{J}} \mathnormal{x}_{ijk} + \sum\limits_{\mathnormal{i} \in \mathcal{I}} \sum\limits_{\mathnormal{j} \in \mathcal{J}} \mathnormal{F}_{i}\mathnormal{ACT}_{ij},
   \end{aligned} \label{eq:minOptFormulation}
\end{equation} 
\begin{equation}
 \begin{aligned}
 & \text{s.t.}
 & & \sum\limits_{\mathnormal{i} \in \mathcal{I}} x_{ijk} = d_{jk}, ~\forall \mathnormal{j} \in \mathcal{J}, ~\forall \mathnormal{k} \in \mathcal{K},
  \end{aligned} \label{eq:demandsConstraints}
\end{equation}
\begin{equation}
	\begin{aligned}
& & & \sum\limits_{\mathnormal{j} \in \mathcal{J}} (1+a_{ijk})x_{ijk} \leq b_{ik}, ~\forall \mathnormal{i} \in \mathcal{I}, ~\forall \mathnormal{k} \in \mathcal{K},
	\end{aligned} \label{eq:capacityConstraints}
\end{equation}
\begin{equation}
\begin{aligned}
 & & & x_{ijk} \geq 0, ~\forall \mathnormal{i} \in \mathcal{I}, ~\forall \mathnormal{j} \in \mathcal{J}, ~\forall \mathnormal{k} \in \mathcal{K},
\end{aligned} \label{eq:nonNegativityConstraints}
\end{equation}
\begin{equation}
\begin{aligned}
  & & & ACT_{ij} = \mathbbm{1} \big(  \sum\limits_{\mathnormal{k} \in \mathcal{K}} \mathbbm{1} \left( x_{ijk} > 0 \big) \right), ~\forall \mathnormal{i} \in \mathcal{I}, ~\forall \mathnormal{j} \in \mathcal{J}. 
\end{aligned} \label{eq:ACT} 
\end{equation}

Where the objective of (\ref{eq:minOptFormulation}) is to minimize the total cost of two terms: the first aims to capture the total cost incurred by the utilization of the resources over heterogeneous interfaces, the second term captures the cost introduced by splitting the service over multiple interfaces, since with each additional interface utilized the overall cost is encumbered by another $F$-term. The set of constraints in (\ref{eq:demandsConstraints}) ensures that all services demands are met, while the constraints of (\ref{eq:capacityConstraints}) ensure that the service allocation will be  performed on interfaces with available resources. In (\ref{eq:ACT}) the $\mathbbm{1}(.)$ symbol denotes the indication function becoming one if the argument is positive, zero otherwise, thus, $ACT_{ij}$ is one if and only if there is at least one resource utilizing interface $i$ for service $j$.

\begin{figure}
\centering
\includegraphics[width=1\columnwidth]{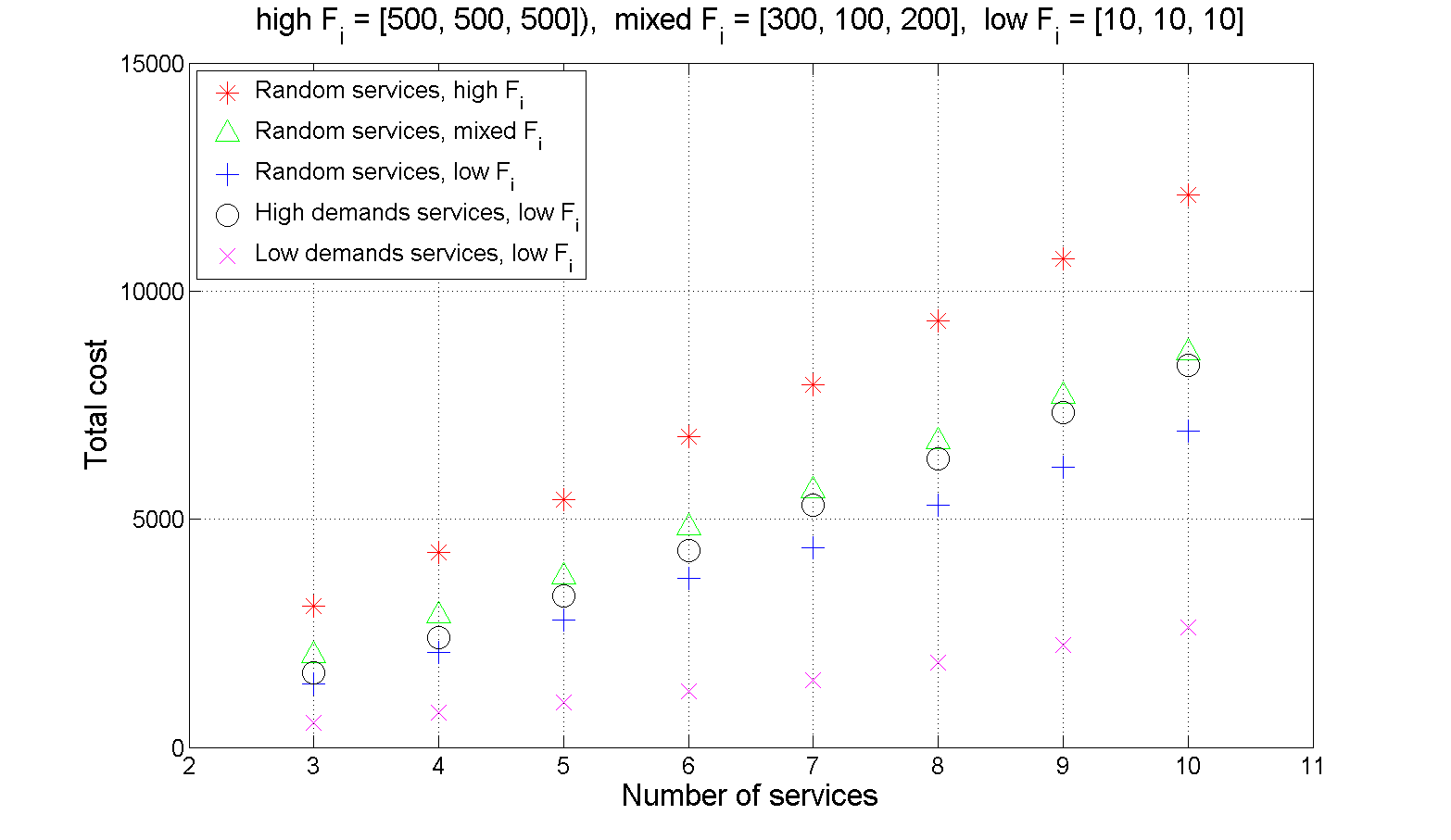}
\caption{Total cost vs number of services.}
\label{fig:cost}
\end{figure}

\begin{figure}
\centering
\includegraphics[width=1\columnwidth]{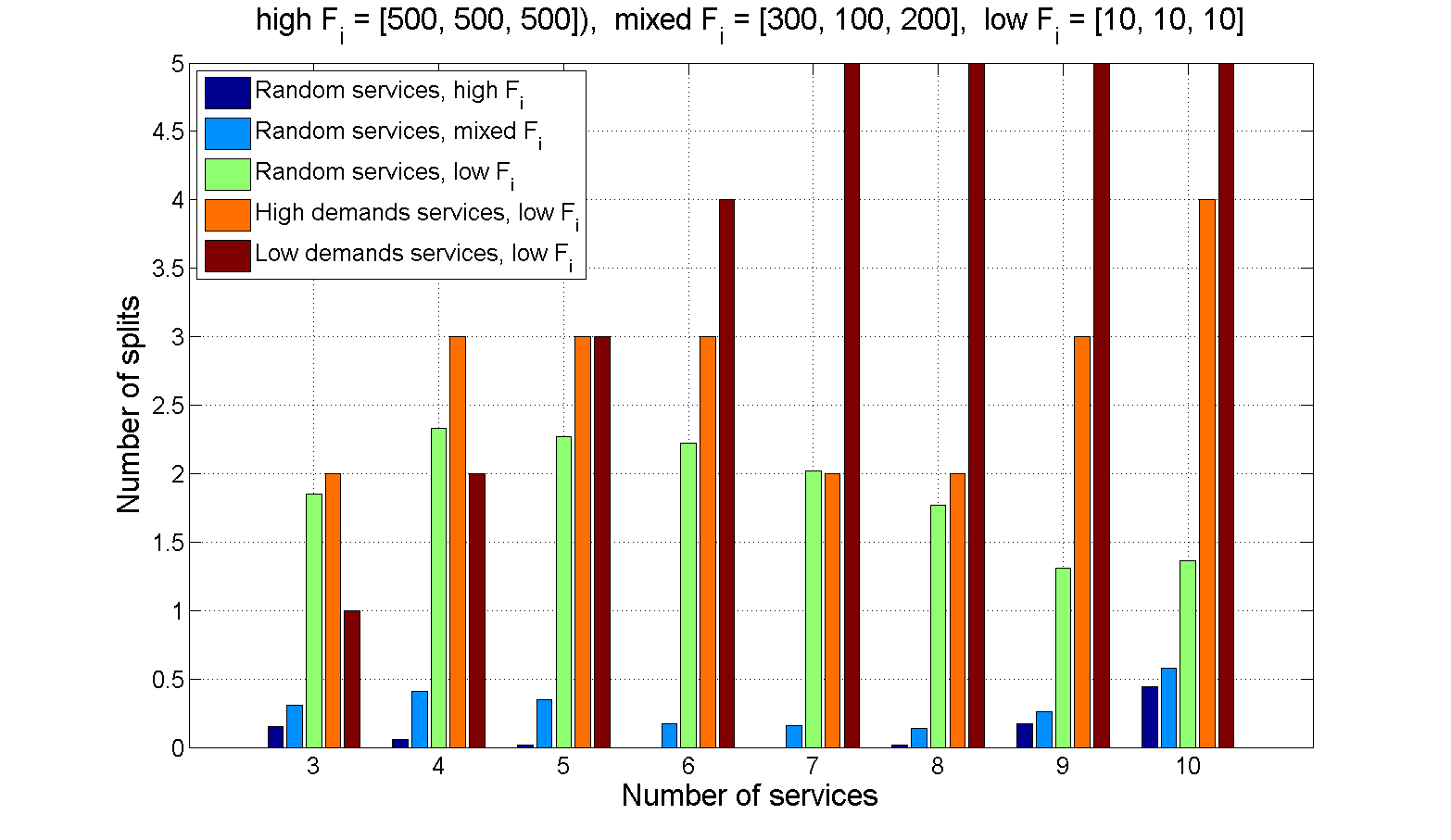}
\caption{Number of splits vs number of services.}
\label{fig:splits}
\end{figure}

\section{SIA complexity characterization}

\begin{theorem} The SIA is NP-Complete. \end{theorem}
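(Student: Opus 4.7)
The plan is to establish NP-completeness of the natural decision version of SIA, namely: given an instance together with a budget $B$, decide whether there is an integer assignment $x_{ijk}$ satisfying constraints (\ref{eq:demandsConstraints})--(\ref{eq:ACT}) whose objective value (\ref{eq:minOptFormulation}) is at most $B$. Membership in NP is the easy half: a guessed assignment $\{x_{ijk}\}$ has encoding size polynomial in the input, because $0 \le x_{ijk} \le d_{jk}$, and feasibility plus the cost bound can be checked in polynomial time by evaluating the sums in (\ref{eq:demandsConstraints})--(\ref{eq:capacityConstraints}) and (\ref{eq:ACT}) directly.

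For hardness, I would reduce from \textsc{Partition}, which asks, given positive integers $s_1,\dots,s_n$ with $S=\sum_j s_j$ even, whether there is a subset summing to $S/2$. Given such an instance, I construct an SIA instance with two interfaces ($|\mathcal{I}|=2$), one resource ($|\mathcal{K}|=1$), $n$ services with demands $d_{j1}=s_j$, capacities $b_{11}=b_{21}=S/2$, overheads $a_{ij1}=0$, unit costs $c_{i1}=0$, activation costs $F_i=1$, and budget $B=n$. The reduction is clearly polynomial and preserves the feasibility assumption since $\sum_j d_{j1}=S=\sum_i b_{i1}$.

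Correctness follows from two observations. In any feasible solution, each service $j$ satisfies $\sum_i \mathit{ACT}_{ij}\ge 1$ by (\ref{eq:demandsConstraints}), so the objective is at least $n$; hence the budget $B=n$ is met precisely when every service is assigned entirely to a single interface, i.e., the $x$-values induce a partition of the services between the two interfaces. Under the capacity constraints $\sum_{j:\mathit{ACT}_{ij}=1} s_j \le S/2$ for $i=1,2$, and since the two sums add to $S$, each must equal $S/2$. Thus a cost-$n$ solution exists iff the \textsc{Partition} instance is a yes-instance.

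The main obstacle I anticipate is making the activation-cost mechanism in (\ref{eq:ACT}) do the right work: the single-resource setting makes the inner indicator trivially $\mathit{ACT}_{ij}=\mathbbm{1}(x_{ij1}>0)$, which is exactly what the reduction exploits to forbid service splitting at the budget $B=n$. A secondary subtlety is confirming that integrality of $x_{ijk}$ is used only in the forward direction; since demands and capacities are integers and each service stays on one interface in any cost-$n$ solution, the induced subset sums are automatically integral, so no rounding argument is required.
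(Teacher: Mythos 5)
Your proof is correct and follows essentially the same route as the paper: a reduction from \textsc{Partition} with two interfaces, one resource, capacities $S/2$, zero overheads and unit costs, unit activation costs, and budget equal to the number of services, with cost exactly $n$ iff no service is split iff a valid partition exists. You additionally spell out NP membership explicitly, which the paper leaves implicit, but the core argument is the same.
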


% COMPLEXITY PROOF
\begin{proof}
The Partitioning Problem (PP) amounts to determining if a set of $\mathnormal{j}$ integers, of sum $\mathcal{S}$ can be partitioned into two subsets, each having a sum of  $\mathcal{S}/2$. The PP is a well-known NP-Complete problem. So, we base the proof on the construction of an instance of the problem from any instance of the Partitioning Problem, as follows. 

Assume that we have only one resource type on the interfaces available $(K=1)$. Let each element in the set of the PP be a service of the SIA problem instance and the value of each element be the resource demand $d_j$ of the corresponding $j$-th service. Additionally, let there be just two interfaces $ (I=2) $, each with resource availability $ b_i=\mathcal{S}/2, \forall i\in \{1,2\} $. 
We set  the overhead coefficients to zero $\alpha_{ij}=0, \forall i  \in \{1, 2\}, \forall j \in \{ 1,\dotsc,J\} $ and the resource activation cost to zero likewise $ c_{ij} = 0 $, while we fix the interface activation cost to one $ F_{i} = 1 $.

The constructed SIA instance is feasible, because (i) by construction the total resource availability on the two interfaces suffices to serve the aggregate demand and (ii) splitting service demand on more than one interfaces is allowed.

Consider a solution of the constructed SIA instance where no splitting occurs. If such a solution exists, then each service is assigned to one interface and by equation (\ref{fig:cost}) the cost will be equal to $J$. Furthermore, since any service split in two interfaces gives a cost of two, if splits exist in a solution, the cost will be at least $J+1$. Hence, by the construction of the instance, it becomes obvious that no value lower  than $J$ can be achieved. The recognition version of the SIA instance is to answer whether or not there is a solution for which the cost is at most some value, in our case $J$.

In any solution of SIA the service demand assigned on each interface will be $\mathcal{S}/2$.  If there is no split of services in a solution, then their assignment to the two interfaces is a partitioning of the integers in the PP with equal sum. So, if the answer to the original instance of the PP is yes, then by assigning to the two interfaces the services mapping to the elements of the solution subsets, no split will exist and the cost will be $J$. Thus, the answer to the recognition version of the SIA instance is yes. Conversely, if the answer to the SIA is yes, then there cannot be a split service so the assignment of services to the two interfaces is a valid PP solution. Therefore, solving the constructed SIA instance is equivalent to solving an arbitrary PP instance.

\end{proof}

\section{Simulation setup and results}

We performed several sets of simulations in Matlab to assess the cost and the number of splits for instances of three to ten services using different configuration of interfaces' costs and capacities as well as services' demands. 

Note that in this early presentation of our work, interfaces' capacities were adequate to provide an optimal solution for each optimization problem. When this doesn't hold true, this problem will entail jobs' scheduling issues. Interfaces' utilization costs ($c_{ik}$'s) were constant throughout the experiments and chosen such that they are not uniform among interfaces. However, the activation costs of the interfaces ($F_i$'s) were tuned in order to reflect the effect they may have splitting the services among several interfaces. 

Services' demands were chosen from three different classes to model the fact that they are not considerably arbitrary. Three sets of simulation setups were considered with the intention of modeling different sets that may arise in practice. The first two sets consisted only of low and high requirements demands respectively, whilst the third one was comprised of a mixture of demands requirements, which were chosen randomly. In this case, we ran the experiments 1000 times and averaged to evaluate the quantities of interest. Additionally, we tried three different activation costs for the random demands set.

\subsection{Total Cost}
The plots reflect the fact that the higher the activation cost ($F_i$), the more expensive it is to split. The optimal costs for different set of services are depicted in \figurename~\ref{fig:cost}.

If the activation cost is much higher than the cost of the interfaces, then the optimal total cost is higher in the case of services of random requirements than in the case of high demands services (with low activation cost).

Having mixed activation costs yields an optimal cost close to that of high demands services, because the latter cause more splits (see \figurename~\ref{fig:splits}) while the former has at most one job split on average.

A general remark on cost is that a less gradual increase in the optimal cost appears, when less splits will happen while the services increase simultaneously. For example, this is the case six or more random demands services are assigned to interfaces with low activation cost.

\subsection{Number of splits}

When the activation cost ($F_i$) is high in comparison to the utilization cost of the interfaces, less splits of services among interfaces happen. For instance, in \figurename~\ref{fig:splits} services of random requirements with high activation cost split at most once on average. On the other hand, when the activation cost is low (compared to the utilization cost) more splits occur.

Low requirements services split more than any other case to exploit the inexpensive (with regard to activation cost)  interfaces. More splits happen as the number of such services increase.

However, this is not the case regarding high demands services. More than six of such services causes a split less. The increase of splits for the case of nine and ten services can be attributed to the chosen interfaces' capacities - a split more was necessary to accommodate the demands of the extra served jobs.

\bibliographystyle{ieeetr}
\bibliography{arxiv_01}

\end{document}